\documentclass[11pt]{article}
\usepackage[letterpaper,margin=1in]{geometry}

\usepackage{amsmath}
\usepackage{amssymb}
\usepackage{amsthm}
\usepackage{booktabs}
\usepackage{enumitem}
\usepackage{xcolor}
\usepackage{tikz}
\usepackage{eso-pic}
\usepackage{url}
\usetikzlibrary{positioning,shapes.geometric,arrows.meta,automata}

\providecommand{\citenamefont}[1]{#1}

\newtheorem{theorem}{Theorem}[section]
\newtheorem{lemma}[theorem]{Lemma}
\newtheorem{proposition}[theorem]{Proposition}

\newtheorem{definition}[theorem]{Definition}

\newcommand{\fito}{\textsc{fito}}
\newcommand{\oae}{\textsc{oae}}
\newcommand{\dco}{\textsc{dco}}
\newcommand{\ico}{\textsc{ico}}
\newcommand{\ilt}{\textsc{ilt}}
\newcommand{\pif}{\textsc{pif}}
\newcommand{\kbp}{\textsc{kbp}}

\title{The Semantic Arrow of Time, Part~II: The Semantics of Open Atomic Ethernet}
\author{Paul Borrill \\ D\AE D\AE LUS \\ \texttt{paul@daedaelus.com} \\ ORCID: 0000-0002-7493-5189}
\date{March 2026}

\begin{document}
\maketitle

\begin{abstract}
\noindent
This is the second of five papers comprising \emph{The Semantic Arrow of Time}.
Part~I~\cite{borrill2026-partI} established that computing's arrow of time is
semantic rather than thermodynamic, and that the Forward-In-Time-Only (\fito{})
assumption constitutes a category mistake.  This paper develops the constructive
alternative.

We present the semantics of Open Atomic Ethernet (\oae{}) links as a concrete
realization of a non-\fito{} protocol architecture.  The key insight is that
causal order is not assumed \emph{a priori} but \emph{created} through
transaction structure: the link state machine progresses through
\textsc{tentative}~$\to$~\textsc{reflecting}~$\to$~\textsc{committed},
with the option to abort at any point before commitment.  Delivery does not
imply commitment; commitment requires reflective acknowledgment---proof that
information has round-tripped and been semantically validated by both endpoints.

We formalize this through three frameworks.  First, the \oae{} link state
machine, a six-state finite automaton whose normative invariants guarantee
that semantic corruption cannot occur at the link level.  Second, Indefinite
Logical Timestamps (\ilt{}), a four-valued causal structure that admits a
genuinely indefinite relation ($a \leftrightarrow b$) between concurrent
events, resolving only after symmetric link-level exchange.  Third, the
Slowdown Theorem applied to links, which establishes that round-trip
measurement is the \emph{minimum} interaction required to establish causal
order---the semantic arrow emerges at the transaction boundary, not below it.

We show that \ilt{} is strictly more expressive than Definite Causal Order
(\dco{}) systems for reversible link protocols, and that standard verification
tools (TLA$^{+}$, P~language) cannot faithfully represent the indefinite
phase without collapsing it to nondeterminism.  We connect these results to
the Knowledge Balance Principle from quantum information theory, showing that
the \oae{} link's epistemic registers implement Spekkens' toy model at the
protocol level.

The paper concludes with a comparative analysis showing that \oae{} achieves
infinite consensus number---enabling wait-free implementations of arbitrary
concurrent objects---while RDMA, NVLink, and UALink remain limited to finite
consensus numbers due to their \fito{} semantics.
\end{abstract}

\section{Introduction}
\label{sec:intro}

Part~I of this series argued that computing harbors a hidden arrow of
time---not the thermodynamic arrow of entropy increase, but a
\emph{semantic} arrow embedded in the \fito{} assumption that pervades
every layer of the networking stack.  We showed that this assumption
constitutes a category mistake in the sense of Ryle~\cite{ryle1949}:
treating an epistemic convention (the logical ordering of messages) as
an ontic commitment (a law of physical causality).

This paper asks: \emph{what does a protocol architecture look like when
the category mistake is corrected?}

The answer is not merely theoretical.  Open Atomic Ethernet (\oae{})
is a protocol framework that embodies three principles derived from
the physics of time:
\begin{enumerate}[leftmargin=1.2cm]
  \item \textbf{Interaction-derived causality:} causal order is not
    assumed but \emph{created} through the structure of transactions.
  \item \textbf{Reflective acknowledgment:} the minimum interaction
    required to establish causal order is a round-trip---a message
    and its semantic reflection.
  \item \textbf{Reversibility until commitment:} state transitions
    are tentative until both parties confirm; irreversibility enters
    only at the point of commitment.
\end{enumerate}

These principles are not engineering preferences.  They follow from the
physics reviewed in Part~I: the time-symmetry of fundamental laws
(Roberts~\cite{roberts2022}), the reality of indefinite causal order
(Oreshkov et al.~\cite{oreshkov2012}), and Spekkens' demonstration
that epistemic restrictions can reproduce quantum phenomena without
ontic commitments~\cite{spekkens2007}.  If the correct ontology is
one where causal order emerges from interaction rather than being
imposed by background structure, then protocols should be designed
accordingly.

The remainder of this paper is organized as follows.
Section~\ref{sec:shannon} traces the \fito{} assumption to its origin in
Shannon's channel model and shows how it propagated through Lamport's
happened-before relation to become an implicit axiom of distributed
computing.
Section~\ref{sec:atomicity} presents the atomicity framework---three
orthogonal dimensions of failure that the semantic arrow must address.
Section~\ref{sec:state-machine} specifies the \oae{} link state machine
and its normative invariants.
Section~\ref{sec:ilt} introduces Indefinite Logical Timestamps and the
four-valued causal structure.
Section~\ref{sec:slowdown} applies the Slowdown Theorem to establish
the round-trip as the minimum causal interaction.
Section~\ref{sec:kbp} connects the link's register structure to
Spekkens' Knowledge Balance Principle.
Section~\ref{sec:comparison} provides a comparative analysis with
RDMA, NVLink, UALink, and CXL.
Section~\ref{sec:summary} summarizes and previews Part~III.

\section{Shannon's Channel Model and the Origin of \fito{}}
\label{sec:shannon}

Claude Shannon's 1948 paper~\cite{shannon1948} models communication
as a unidirectional process: a source selects a message, a transmitter
encodes it, a channel corrupts it with noise, and a receiver decodes
it.  The channel capacity theorem establishes the maximum rate at which
information can be reliably transmitted in the forward direction.

Shannon was solving a specific problem---maximizing throughput on
telephone lines---and his unidirectional model was appropriate for
that purpose.  But when the model was adopted as the foundation for
networking protocols, a subtle promotion occurred.  The unidirectional
channel became not merely an engineering abstraction but the
\emph{ontological model} of communication: information flows from
sender to receiver, and the return path is auxiliary---an
acknowledgment mechanism, not a constitutive element of the
communication itself.%
\footnote{Every real Ethernet link is physically bidirectional:
full-duplex operation has been standard since Fast Ethernet (1995).
Yet protocol design treats the return path as overhead---the ACK as
a reliability mechanism, not as a semantic constituent of the
interaction.  The arrow of time on the link is imposed by the
protocol designer, not by the physics of the medium.}

Lamport's happened-before relation~\cite{lamport1978} extended
Shannon's \fito{} to distributed systems theory.  By defining
$a \rightarrow b$ whenever $a$ is the sending of a message and $b$ is
its receipt, Lamport encoded the assumption that information flow is
unidirectional: the sender's state can influence the receiver's, but
not conversely.  As shown in the companion
paper~\cite{borrill2026-lamport}, this encodes three implicit
commitments: temporal monotonicity, asymmetric causation, and global
causal reference.

Edward Lee has articulated the deeper issue:
\begin{quote}
\small
Determinism is a property of models, not of the physical
world.~\cite{lee2009}
\end{quote}
Shannon's channel is a \emph{model} of communication, not communication
itself.  The model is deterministic and unidirectional; the physical
medium is neither.  Promoting the model's properties to ontological
status---treating the channel's forward direction as the direction
of causality---is the category mistake that \oae{} corrects.

\section{The Atomicity Framework}
\label{sec:atomicity}

Before specifying the link state machine, we must identify what it
protects against.  The semantic arrow of time is violated when
transactions fail to preserve meaning across their execution.
Following the analysis in~\cite{borrill2026-semantic-integrity}, we
identify three orthogonal dimensions of failure:

\begin{definition}[Three Failure Classes]
\label{def:failure-classes}
\mbox{}
\begin{enumerate}[leftmargin=1.2cm]
  \item \textbf{Detected Unrecoverable Errors (DUE):} failures that
    are detected and cause an explicit fail-stop.  These preserve the
    semantic arrow: the system knows it has failed and can act
    accordingly.
  \item \textbf{Silent Data Corruption (SDC):} undetected value
    corruption.  The semantic arrow is violated silently: data parses
    correctly but means the wrong thing.
  \item \textbf{Atomicity Errors:} violations of atomic visibility,
    agreement, or completeness.  These are the \emph{dominant} failure
    mode in distributed systems and the primary target of the \oae{}
    link state machine.
\end{enumerate}
\end{definition}

Atomicity errors are further decomposed into three dimensions, each
corresponding to a way the semantic arrow can be violated:

\begin{definition}[Three Atomicity Dimensions]
\label{def:atomicity-dims}
\mbox{}
\begin{enumerate}[leftmargin=1.2cm]
  \item \textbf{Atomicity of Updates:} state transitions must occur
    as indivisible operations.  No observer may see a partially
    applied update.
  \item \textbf{Atomicity of Communication:} delivery of data does
    not constitute semantic agreement.  A message may arrive
    intact---every bit correct---and yet the receiver may not have
    committed the semantic transition that the sender intended.
  \item \textbf{Atomicity of Visibility:} the state observed by any
    participant must correspond to a consistent snapshot.  Mixed-era
    state---some fields from before a transaction, some from after---must
    not be observable.
\end{enumerate}
\end{definition}

\footnote{The OCP Silent Data Corruption
initiative~\cite{ocp-sdc2024} has focused primarily on SDC---the
hardware-level corruption of bit values.  The \oae{} framework extends
this to atomicity errors, which are semantic rather than syntactic.
A system can have perfect bit integrity and still suffer catastrophic
semantic corruption if atomicity is violated.}

These dimensions yield three normative invariants, stated in RFC~2119
language~\cite{rfc2119}:

\begin{description}[leftmargin=1.2cm]
  \item[A1.] Systems \textsc{must not} expose partially applied updates.
    Global updates \textsc{must} commit atomically or not at all.
  \item[A2.] Delivery confirmation \textsc{must not} be treated as
    semantic agreement.  Receivers \textsc{must} validate completeness
    and semantic consistency before committing.
  \item[A3.] Observed state \textsc{must} correspond to a consistent
    snapshot.  Mixed-era state \textsc{must not} be treated as final.
\end{description}

These invariants define what the semantic arrow \emph{requires}.  The
link state machine, specified next, defines how to \emph{enforce} them.

\section{The \oae{} Link State Machine}
\label{sec:state-machine}

The \oae{} link state machine is a six-state finite automaton that
governs the lifecycle of every transaction on a link.  Its design
embodies the three principles stated in the introduction:
interaction-derived causality, reflective acknowledgment, and
reversibility until commitment.

\begin{definition}[\oae{} Link States]
\label{def:link-states}
The link state machine has six states:
\begin{description}[leftmargin=2cm]
  \item[\textsc{reset}] Link initialization or recovery.  No semantic
    assumptions are valid.  No application-visible state exchange
    occurs.
  \item[\textsc{idle}] Quiescent, synchronized link.  Ready for a new
    interaction.  No tentative state is outstanding.
  \item[\textsc{tentative}] Provisional transmission of semantic
    content.  Data has been sent but \textsc{must not} be treated as
    committed by either endpoint.  The receiver treats incoming data
    as provisional.
  \item[\textsc{reflecting}] Explicit semantic reflection.  The
    receiver reflects back an interpreted digest of the content,
    enabling the sender to validate that semantic agreement has been
    reached.  Reflection \textsc{must not} itself commit state.
  \item[\textsc{committed}] Atomic acceptance.  The transition to
    \textsc{committed} must be explicit.  All tentative state becomes
    visible atomically.  Once committed, \textsc{must not} be silently
    revoked.
  \item[\textsc{aborted}] Explicit rejection or rollback.  All
    tentative state is discarded.  No partial effects remain visible.
\end{description}
\end{definition}

The mandatory transitions are:
\begin{align}
  \textsc{reset}      &\longrightarrow \textsc{idle} \label{eq:reset-idle}\\
  \textsc{idle}       &\longrightarrow \textsc{tentative} \label{eq:idle-tent}\\
  \textsc{tentative}  &\longrightarrow \textsc{reflecting} \label{eq:tent-refl}\\
  \textsc{reflecting} &\longrightarrow \textsc{committed} \label{eq:refl-comm}\\
  \textsc{tentative}  &\longrightarrow \textsc{aborted} \label{eq:tent-abort}\\
  \textsc{reflecting} &\longrightarrow \textsc{aborted} \label{eq:refl-abort}\\
  \textsc{committed}  &\longrightarrow \textsc{idle} \label{eq:comm-idle}\\
  \textsc{aborted}    &\longrightarrow \textsc{idle} \label{eq:abort-idle}
\end{align}

\footnote{Note what is \emph{absent}: there is no direct transition
from \textsc{tentative} to \textsc{committed}.  Every commitment must
pass through \textsc{reflecting}---the round-trip is mandatory, not
optional.  This is the structural enforcement of reflective
acknowledgment: you cannot commit what you have not confirmed.}

The link state machine enforces four normative invariants:

\begin{enumerate}[leftmargin=1.2cm]
  \item The link \textsc{must not} expose committed state unless
    \emph{both} endpoints have reached \textsc{committed}.
  \item The link \textsc{must not} infer commitment from delivery,
    completion, or ordering guarantees.
  \item Loss, reordering, or duplication of messages \textsc{must}
    result in reflection failure or explicit abort.
  \item Timeouts \textsc{must} resolve to \textsc{aborted}, not
    to implicit commit.
\end{enumerate}

Invariant~(4) is the direct negation of \fito{}'s timeout-and-retry
(TAR) pattern.  In a \fito{} protocol, a timeout triggers a retry in
the forward direction; the system assumes the original message was lost
and tries again.  In \oae{}, a timeout triggers an abort: the system
assumes that the transaction has failed and returns to a known-good
state.  Forward progress is achieved not by retrying the same operation
but by initiating a \emph{new} transaction from quiescence.

\section{Indefinite Logical Timestamps}
\label{sec:ilt}

Classical distributed systems offer two causal relations between
events: \emph{ordered} ($a \rightarrow b$ or $b \rightarrow a$) and
\emph{concurrent} ($a \parallel b$).  The concurrent relation is a
residual category: it means ``we cannot determine the order,'' but the
framework assumes that a definite order \emph{exists}---it is merely
unknown.%
\footnote{Lamport~\cite{lamport1978}: ``Two distinct events $a$
and $b$ are said to be concurrent if $a \nrightarrow b$ and
$b \nrightarrow a$.''  Concurrency is defined negatively, as the
absence of ordering.  It does not admit the possibility that the
ordering is genuinely indefinite.}

Indefinite Logical Timestamps (\ilt{})~\cite{borrill2026-ilt}
introduce a fourth causal relation that corresponds to the indefinite
causal order established by Oreshkov, Costa, and
Brukner~\cite{oreshkov2012} in quantum mechanics:

\begin{definition}[Four-Valued Causal Structure]
\label{def:four-valued}
For events $a$ and $b$ on an \oae{} link, the causal relation is one of:
\begin{align}
  a &\prec b    &&\text{(definite forward: $a$ causally precedes $b$)} \\
  b &\prec a    &&\text{(definite backward: $b$ causally precedes $a$)} \\
  a &\parallel b &&\text{(concurrent: causally independent)} \\
  a &\leftrightarrow b &&\text{(indefinite: order not yet resolved)}
\end{align}
The indefinite relation $a \leftrightarrow b$ is temporary and
reversible.  It resolves into one of the other three relations only
after a symmetric link-level exchange---the reflecting phase of the
link state machine.
\end{definition}

The indefinite relation is not epistemic uncertainty (we don't know
the order) but ontic indefiniteness (the order does not yet exist).
This mirrors the distinction in quantum mechanics between a mixed
state (classical ignorance) and a superposition (genuine
indefiniteness).  The resolution of $a \leftrightarrow b$ into a
definite relation is analogous to the collapse of a superposition
upon measurement---except that in \oae{}, the ``measurement'' is the
reflecting phase of the transaction.%
\footnote{The 2-bit encoding from the Leibniz Bridge
synthesis~\cite{borrill2026-leibniz}: \texttt{01}~$= a \prec b$,
\texttt{10}~$= b \prec a$, \texttt{00}~$= a \parallel b$,
\texttt{11}~$= a \leftrightarrow b$.  The gap in classical
distributed systems is cell~\texttt{11}: no existing framework
(Lamport, Fidge, Mattern) can represent it.}

\subsection{Tensor Clocks}

\ilt{} replaces vector clocks with a more compact representation:

\begin{definition}[Tensor Clock]
\label{def:tensor-clock}
A tensor clock on a link between nodes $A$ and $B$ is a triple
$\mathit{TC} = (c_A, c_B, d)$, where:
\begin{itemize}[leftmargin=1.2cm]
  \item $c_A$ is $A$'s induced forward influence (how many
    transactions $A$ has initiated),
  \item $c_B$ is $B$'s induced forward influence (how many
    transactions $B$ has initiated),
  \item $d$ is the number of symmetric rounds that have reached
    mutual knowledge (reflecting phases completed).
\end{itemize}
\end{definition}

Tensor clocks avoid the $O(n)$ space blowup of vector clocks by
confining state to the Moore neighborhood in a mesh topology.  Each
link maintains its own tensor clock independently; there is no global
clock vector to maintain or synchronize.  This locality is essential
for scalability: the state required per link is constant regardless
of system size.

\subsection{Reversible Protocol Algebra}

The lifecycle of a transaction on an \ilt{}-equipped link consists of
three phases:
\begin{enumerate}[leftmargin=1.2cm]
  \item \textbf{Indefinite phase:} hyperdata circulates on the link;
    orientation is not fixed; the causal relation between the endpoints'
    operations is $a \leftrightarrow b$.
  \item \textbf{Resolution phase:} one side commits to initiating;
    the orientation flag is set; the reflecting phase begins.
  \item \textbf{Completion:} indefiniteness collapses into a definite
    causal order ($a \prec b$ or $b \prec a$).  Classical \dco{}
    machinery (logs, audit trails, TLA$^{+}$ models) can safely project
    the result.
\end{enumerate}

\begin{theorem}[Expressiveness]
\label{thm:expressiveness}
\ilt{} is strictly more expressive than \dco{} systems in the domain
of reversible link protocols.
\end{theorem}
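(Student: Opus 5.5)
The plan is the usual two-part argument for strict expressiveness: a faithful embedding of \dco{} into \ilt{}, followed by a separating example that no \dco{} system can represent faithfully. To make this precise I first fix what ``expressive'' means in the domain of reversible link protocols. A system's \emph{causal vocabulary} is the set of relations it can assign to a pair of events $(a,b)$ on a link. A \emph{faithful} representation of a reversible protocol is a map from protocol states to causal assignments that meets two conditions. First, it commutes with the transitions of the link state machine (Definition~\ref{def:link-states}, transitions \eqref{eq:reset-idle}--\eqref{eq:abort-idle}). Second, it sends the inverse of every reversible step (the abort transitions \eqref{eq:tent-abort}, \eqref{eq:refl-abort} followed by \eqref{eq:abort-idle}) to an assignment that undoes the original one.

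\textbf{Step 1 (embedding).} I will show $\dco{} \subseteq \ilt{}$. A \dco{} system assigns each pair one of $a\prec b$, $b\prec a$, $a\parallel b$. These are exactly the codes \texttt{01}, \texttt{10}, \texttt{00} of the four-valued structure in Definition~\ref{def:four-valued}. The embedding is the identity on these three values. On clocks, a vector-clock comparison restricted to a single link is recovered from the tensor clock $(c_A,c_B,d)$ of Definition~\ref{def:tensor-clock} by comparing the $c_A,c_B$ components and ignoring $d$. Checking that this commutes with transitions is routine, because \dco{} protocols never enter the \texttt{11} cell.

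\textbf{Step 2 (separation).} I will exhibit a two-event protocol in which both endpoints emit hyperdata during \textsc{tentative}, and the transaction is either aborted or resolved by reflection in either orientation. In \ilt{} the state during \textsc{tentative}/\textsc{reflecting} is $a\leftrightarrow b$. An abort returns the pair to its prior relation with the tensor-clock component $d$ unchanged, and the process is exactly reversible. I then suppose a \dco{} representation $\phi$ exists and split on the value of $\phi$ in the indefinite phase.
\begin{itemize}
\item If $\phi$ assigns $a\prec b$ (or $b\prec a$), then the execution that resolves with the opposite orientation contradicts the already-recorded order. Since \dco{} orders are monotone (definite relations are never revoked), this breaks commutation.
\item If $\phi$ assigns $a\parallel b$, then the completion phase must later convert concurrency into an order. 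In \dco{} concurrency is final for a fixed pair, since it is determined by the histories. So $\phi$ either misrepresents the committed execution or must treat the pair as distinct events, which breaks reversibility under abort.
\end{itemize}
In every case $\phi$ fails, so no faithful \dco{} representation exists.

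\textbf{Main obstacle.} The hard step is the $a\parallel b$ case. A \dco{} modeller can object that concurrency plus nondeterministic later choice already simulates indefiniteness, which is the same collapse-to-nondeterminism issue raised in the abstract for TLA$^{+}$ and P. To rule this out, I will make ``faithful'' demand \emph{state-level} reversibility. That is, after the abort, the representation must be identical to the pre-transaction one, with no residual branching record. The \dco{} simulation necessarily leaves a trace of the choice point in its history, whereas \ilt{} does not. Formalizing this ``no residual trace'' condition cleanly, without making the theorem true by definition, is where I expect most of the care to go. I will first try to phrase it as requiring the representation map to be a homomorphism from the protocol's reversible transition group into the causal-assignment dynamics.
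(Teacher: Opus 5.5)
Your Step~2 case split is the paper's own argument. Its sketch says only that $a\leftrightarrow b$ is neither ordered nor concurrent because it can resolve into either, and your Step~1 makes explicit the inclusion the paper leaves tacit. The trouble is the step you single out as the main obstacle: it is aimed at the wrong escape.

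For a fixed pair, your $a\parallel b$ bullet is already closed by the finality you invoke. In Lamport's model the causal past of an event is fixed when it occurs, so the relation between two events that have both occurred never changes. All you need is that both orientations are reachable from one \textsc{tentative} state; the abort path plays no role.

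The escape a \dco{} modeller will actually take is \emph{partiality}. Do not register $a,b$ until resolution. Then $\phi$ is undefined on the pair during \textsc{tentative}/\textsc{reflecting}, equals the resolved order after commit, and is still undefined after abort. That representation commutes with every transition and leaves no residual trace, so your state-level reversibility condition does not exclude it. It is essentially the ``collapse to nondeterminism'' of Lemma~\ref{lem:tla}.

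What excludes it is a clause your case split uses tacitly but never states. The map $\phi$ must assign a \dco{} relation to every pair of events present in the \ilt{} state, track that same pair through resolution, and agree with \ilt{} wherever \ilt{}'s relation is definite. This is the commitment hidden in the paper's ``every pair of events is either ordered or concurrent.'' Make it part of the definition of faithfulness and your three cases close at once. This is also where the result is, as you suspected, close to definitional.

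The reversibility machinery should therefore go, and as proposed it is ill-posed anyway. Transitions \eqref{eq:reset-idle}--\eqref{eq:abort-idle} contain no inverses; abort and commit are both loops back to \textsc{idle}. So there is no ``reversible transition group,'' only a path category in which you could require abort loops to act as the identity. Moreover, $\phi$ must live on extended states, because \textsc{idle} alone cannot tell the two loops apart.

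If ``no residual trace'' is imposed on that full state, \ilt{} itself may fail it. $c_A$ counts transactions $A$ has initiated, and an abort from \textsc{reflecting} happens after the resolution phase has fixed an initiator. Such an abort can leave $c_A$ or $c_B$ incremented even though $d$ is unchanged.

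Finally, the clock-level half of Step~1 is unneeded and not supported by Definition~\ref{def:tensor-clock}, which keeps per-link transaction counts rather than per-event timestamps. The inclusion of \texttt{01}, \texttt{10}, \texttt{00} into the four-valued structure is all the embedding uses.
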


\begin{proof}[Proof sketch]
We exhibit a protocol state (the indefinite phase, where
$a \leftrightarrow b$) that can be represented in \ilt{} but not in
any \dco{} framework.  In \dco{}, every pair of events is either
ordered or concurrent; there is no third category.  Since the
indefinite relation is neither ordered nor concurrent (it resolves into
either upon completion), it cannot be faithfully represented.  Formal
details in~\cite{borrill2026-ilt}.
\end{proof}

\begin{lemma}
\label{lem:tla}
No TLA$^{+}$ specification can faithfully represent the \ilt{}
indefinite phase without collapsing it to nondeterminism.
\end{lemma}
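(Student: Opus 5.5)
The plan is to argue via the semantics of TLA$^{+}$ itself: a specification denotes a set of \emph{behaviors}, each an infinite sequence of states, and every state assigns a definite value to every variable. I would begin by fixing what ``faithfully represent'' must mean. A specification $S$ faithfully represents the indefinite phase if, in the reachable states corresponding to that phase, the relation between $a$ and $b$ is the fourth value $a \leftrightarrow b$ of Definition~\ref{def:four-valued}. That value must be distinct both from a fixed but unknown choice among $\{a \prec b,\, b \prec a,\, a \parallel b\}$ and from a disjunction over those choices. The indefinite relation must also be one that resolves only through the reflecting phase of the link state machine.

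The first step is to show that any TLA$^{+}$ encoding produces one of two outcomes. The first is a state variable whose value, in each state of each behavior, is one of the definite relations. The second is a variable that takes a fourth symbolic value, say \texttt{11} in the 2-bit encoding. In the symbolic case, the key point is that TLA$^{+}$ assigns this symbol no semantics beyond its role in the next-state relation. Its resolution step has the form $\mathit{rel} = \texttt{11} \wedge \mathit{rel}' \in \{\prec, \succ, \parallel\}$. Under the standard trace semantics, this action is an existential choice over successor states. The set of behaviors of $S$ is therefore the union, over resolutions, of behaviors in which the order was effectively chosen at that step. This is what I mean by ``collapsing to nondeterminism'': the relation is recovered as a branching point in the behavior tree.

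The second step makes the collapse precise. I would show that $S$ is stuttering-equivalent, after hiding the auxiliary symbol, to a specification $S'$ that guesses the definite relation early and keeps it in a prophecy-like hidden variable. Standard refinement arguments using auxiliary (prophecy) variables show that $S$ and $S'$ satisfy the same TLA$^{+}$ formulas over the visible variables. Hence no TLA$^{+}$ property distinguishes the ``order does not yet exist'' reading from the ``order exists but is unknown'' reading. By contrast, Theorem~\ref{thm:expressiveness} takes these two readings to be distinct in \ilt{}, so $S$ does not represent the indefinite phase faithfully.

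The main obstacle is the first step: pinning down ``faithful'' rigorously enough that the symbolic-value encoding is genuinely excluded rather than dismissed by fiat. My first attempt would be an observational-equivalence criterion. A representation is faithful only if it separates two systems that \ilt{} separates, and I would then show that the prophecy construction makes them TLA$^{+}$-indistinguishable. If this cannot be made airtight, I would restrict the lemma to linear-time trace semantics and state the result relative to that semantics, deferring formal details to~\cite{borrill2026-ilt} as in the preceding proof sketch.
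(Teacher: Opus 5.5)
The paper states this lemma without any proof; the nearest argument is the sketch for Theorem~\ref{thm:expressiveness}, which defers to the companion \ilt{} paper. So your proposal has to stand on its own, and its load-bearing step does not.

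Your second step is sound as far as it goes. With the guess existentially quantified, the late-choice specification $S$ and the early-guess specification $S'$ are equivalent TLA$^{+}$ formulas. You need not even hide $\mathit{rel}$: let $S'$ display \texttt{11} until resolution and copy its hidden guess at that step. But this is only the standard fact that linear-time semantics cannot see \emph{when} a nondeterministic choice is made, and it uses nothing specific to TLA$^{+}$.

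Your conclusion then rests entirely on ``Theorem~\ref{thm:expressiveness} takes these two readings to be distinct.'' That sketch merely asserts that \dco{} has no third category, so that $a \leftrightarrow b$ is something genuinely distinct. That assertion is essentially what you are trying to prove, so the appeal is circular.

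Worse, the argument proves too much. Any operational model of \ilt{} itself has a resolution step (``one side commits to initiating; the orientation flag is set'') that is a transition out of a state labelled \texttt{11}. By the same prophecy construction, that model is equivalent to an early-guess model. Likewise, your reason for discounting the symbolic encoding in step one is that the symbol has no meaning beyond its role in the next-state relation. That applies verbatim to \ilt{}'s own 2-bit code \texttt{11}. So either \ilt{} also ``collapses to nondeterminism,'' or the separation is by fiat.

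What is missing is an independent semantics for $a \leftrightarrow b$, together with a concrete witness: something \ilt{} distinguishes that no set of linearly ordered behaviors can. Retreating to linear-time trace semantics does not supply it. It just turns the lemma into a definition (indefinite $:=$ not trace-equivalent to a hidden early choice). Even a branching-time formalism, which does separate $S$ from $S'$, still represents the unresolved phase as branching, i.e.\ as nondeterminism. So the obstacle you identified is not specific to TLA$^{+}$.

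The natural candidate witness is a causal-inequality violation in the sense of Oreshkov, Costa and Brukner. As far as I know, it is unavailable on a single classical link. Their classical-limit result says bipartite classical processes always exhibit causal order, i.e.\ reduce to mixtures of definite orders; genuinely non-causal classical processes need at least three parties. Mixtures of definite orders are exactly what your $S$ and $S'$ generate.

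Note also that the paper's own Spekkens-style picture puts a definite but inaccessible \textsc{ont} state beneath each endpoint's \textsc{epi} half. That sits closer to your $S'$ (``order exists but is unknown'') than to $S$. Until such a semantics and witness are supplied, the proposal establishes only the known limitation of linear-time semantics, not the lemma as stated.
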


\begin{lemma}
\label{lem:p-language}
No P~language machine can encode reversible \ico{}-style causal
symmetry as native semantic structure.
\end{lemma}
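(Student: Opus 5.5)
The plan is to model the P language semantically and show that its execution semantics forces a definite causal order, so that any encoding of the \ilt{} indefinite relation $a \leftrightarrow b$ must be either an ordered relation or nondeterminism over orderings. This is the same projection argument used for Theorem~\ref{thm:expressiveness} and Lemma~\ref{lem:tla}. We take a P program to be a finite collection of communicating state machines. Each machine has an input queue. Sends are asynchronous enqueues, and each step dequeues exactly one event and executes a handler atomically. Under this model every P execution is a sequence of such steps. So the set of executions is a set of totally ordered traces, and the induced causal relation on events is Lamport's happened-before relation, built from send$\to$receive edges and program order. By Definition~\ref{def:four-valued}, this relation can only produce the values \texttt{01}, \texttt{10} and \texttt{00}, never \texttt{11}.

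First, we make precise what ``native semantic structure'' means. An encoding is native if the causal relation between two events is determined by the semantics of a single execution, not by a quantifier over the set of executions. Under this reading, \ico{}-style causal symmetry requires a single run in which $a \leftrightarrow b$ holds. That relation must be symmetric under exchanging $a$ and $b$, and it must remain reversible: it may later resolve to $a \prec b$ or to $b \prec a$ without either being retracted.

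Second, we show that in any single P run, the queue discipline assigns each pair of handler executions a definite relation: either the first precedes the second in the run, or the second precedes the first in the run, with independence captured as $\parallel$. The relation is irrevocable once the dequeue occurs. P has no primitive that un-dequeues an event or reverses a completed handler.

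Third, we consider the remaining ways to simulate $a \leftrightarrow b$. One is scheduler nondeterminism across runs, which is by definition the collapse to nondeterminism excluded in Lemma~\ref{lem:tla}. Another is explicit bookkeeping: a variable holding the tag \texttt{11} with rollback handlers written by the programmer. In this case the symmetry lives in user data, not in P's causal semantics, so the underlying event order is still definite. It is therefore not native.

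The main obstacle is the definition of ``native.'' Without a formal criterion, the bookkeeping encoding looks like a counterexample. I would first try to fix this criterion as invariance of the encoded relation under P's own trace projection: the relation must be recoverable from P's happened-before relation on handler events alone. I would then show that the bookkeeping encoding fails this test, because its projection yields a definite order. Formal details of the target \ico{} semantics would be taken from~\cite{borrill2026-ilt}.
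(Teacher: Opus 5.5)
The paper states this lemma without proof; formal details are deferred to the \ilt{} reference. The only nearby argument is the sketch for Theorem~\ref{thm:expressiveness}, and your opening paragraph is that argument specialized to P's interleaving semantics, so to that extent you match the paper's reasoning. The gap is the step you yourself flag as the main obstacle: your treatment of ``native'' does not resolve it, it stipulates it away.

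Test the bookkeeping encoding against your \emph{first} criterion. In a single run the tag holds \texttt{11}, a value invariant under swapping $a$ and $b$. The programmer's rollback handlers let it later become \texttt{01} or \texttt{10} without retraction. All of this is fixed by one execution, not by quantifying over executions, so the encoding meets every requirement you listed. You exclude it only through the second criterion, recoverability from P's happened-before relation on handler events. But you introduced that relation as Lamport's, which is three-valued by construction. The criterion therefore presupposes the conclusion, and ``the bookkeeping encoding fails the test'' is a restatement rather than a step. (In step two, also keep the relations apart: position in a single interleaved run totally orders all handler steps, so $\parallel$ can only come from happened-before, not from ``precedes in the run.'')

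The same move also proves too much. The paper's own realization of $a \leftrightarrow b$ is register state: the 2-bit cell \texttt{11}, the orientation flag, the tensor clock. That state is maintained by classical link hardware whose underlying events are definitely ordered. Your remark that ``the symmetry lives in data, the underlying order is still definite'' applies verbatim, and would show that \ilt{} does not natively encode \ico{}-style symmetry either.

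Likewise, identifying the \ilt{} events $a,b$ with \emph{completed} P handler executions is an unargued modeling choice. In the indefinite phase neither operation has been committed. The nearer P analogues are pending or deferred events, whose relative order is not yet fixed in a run prefix.

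The only consistent reading of ``native'' is relative to a formalism's semantic primitives. \ilt{} takes the four-valued relation as primitive. P's semantic domain is sets of interleaved traces of send, dequeue and handler steps, whose only causal structure is happened-before. Under that reading the lemma is an immediate, essentially definitional consequence of P's semantics, and your steps two and three add nothing. Present it that way, rather than as a test that the bookkeeping encoding independently fails.
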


\footnote{The inability of TLA$^{+}$ and P to represent \ilt{} is
not a criticism of these tools---they are excellent for what they
do.  It is a demonstration that \fito{} is embedded in the
\emph{metalogic} of our verification tools, not just in the protocols
they verify.  A new generation of verification tools, capable of
expressing indefinite causal order natively, is needed.}

These results have practical consequences.  AWS's systems correctness
practices~\cite{aws-correctness2025}, which include TLA$^{+}$, P
models, and theorem provers, are all fully \dco{}/\fito{} in their
semantic core.  They can verify that a protocol behaves correctly
\emph{given} that causality is forward-only and globally acyclic.
They cannot verify that a protocol correctly handles the indefinite
phase---the very phase that makes \oae{} links more expressive than
\fito{} protocols.

\section{The Slowdown Theorem Applied to Links}
\label{sec:slowdown}

The Slowdown Theorem of Gorard, Beggs, and
Tucker~\cite{gorard-beggs-tucker} establishes a lower bound on
computational irreducibility in physical systems.  Applied to a
classical scatter machine---a projectile scattered by a sharp wedge
with binary outcomes---the theorem shows that excavating arbitrarily
many decimal places of initial-condition precision requires
correspondingly many computational steps.  No shortcut exists.

The application to links is direct.  Consider two endpoints, Alice
and Bob, connected by a physical medium.  Alice sends a message; Bob
receives it.  In a \fito{} framework, this is sufficient to establish
causal order: $\text{send}(m) \rightarrow \text{recv}(m)$.  But what
has actually been established?  Alice knows she sent a message.  She
does not know whether Bob received it, whether Bob interpreted it as
she intended, or whether Bob's state is now consistent with hers.%
\footnote{This is the Two Generals problem in its purest form.
Gray~\cite{gray1978} showed that no finite number of messages
suffices to guarantee coordinated action.  But the problem is
deeper than Gray realized: it is not that coordination is
\emph{impossible} but that forward-only messages are the wrong
primitive for achieving it.}

The Slowdown Theorem tells us why: \emph{establishing the causal
relationship between two events requires at least one round-trip.}
A single message establishes \emph{placement}---data has been
transmitted.  Only a reflected message---one that demonstrates the
receiver has processed the content and fed back a semantic
digest---establishes \emph{agreement}.

This is the network-level analogue of the scatter machine's
requirement for multiple interactions to excavate precision.  In the
scatter machine, a single observation yields one bit (left or right);
extracting $n$ bits requires $n$ interactions.  On a link, a single
message yields placement; extracting semantic agreement requires a
round-trip.

\begin{proposition}[Round-Trip Minimum]
\label{prop:roundtrip}
The minimum interaction required to establish a definite causal
ordering between two events on a link is one round-trip (message
plus reflected response).  This is a consequence of the Slowdown
Theorem: the causal precision of a single message is insufficient
to determine mutual semantic state.
\end{proposition}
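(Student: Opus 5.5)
The plan is to separate the claim into a lower bound (no strictly one-way exchange suffices) and an upper bound (one round-trip does suffice), stated against the epistemic model the paper already uses. An endpoint may treat a relation between events as definite ($a \prec b$ or $b \prec a$ in Definition~\ref{def:four-valued}) only when its local state is enough to conclude that the other endpoint has processed and validated the content. That is exactly the precondition for leaving \textsc{reflecting} towards \textsc{committed} in Definition~\ref{def:link-states}. Read this way, ``establishing definite causal order'' becomes a knowledge condition: the initiator must know that the receiver's state is consistent with its own. I will treat the link as an asynchronous channel that may lose, reorder or duplicate messages, as invariant~(3) permits.

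\textbf{Lower bound.} Take any protocol execution in which information flows only from Alice to Bob. I will use the standard indistinguishability argument. Build a second execution in which Bob's message is lost, or in which Bob receives it but interprets it differently. Alice's local history is identical in the two executions, because nothing flowed back to her. So she cannot distinguish ``Bob holds the intended semantic state'' from ``Bob does not''. By invariant~(2), delivery cannot stand in for commitment, so Alice must leave the relation $\leftrightarrow$ unresolved.

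The same holds for any number of Alice-to-Bob messages. This is the precise content I would attribute to the Slowdown Theorem here: each forward transmission adds precision only about \emph{placement}, not about Bob's interpretation. Bob, for his part, can know that he received something, but not that Alice knows he did. So on neither side can the indefinite relation collapse into a mutually known definite order.

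\textbf{Upper bound.} Exhibit the \textsc{tentative}~$\to$~\textsc{reflecting}~$\to$~\textsc{committed} path. Bob's reflected digest depends causally on his interpretation of Alice's content. When Alice receives it and checks that it matches, her state is no longer indistinguishable from the failure executions. She can therefore fix $a \prec b$ and increment $d$ in the tensor clock of Definition~\ref{def:tensor-clock}. One message in each direction is thus sufficient.

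The main obstacle is the lower bound's formal hook to the Slowdown Theorem of Gorard--Beggs--Tucker. That theorem concerns precision extracted from scatter-machine queries, not epistemic states of endpoints. I would first try to model each message as a single oracle query that returns information only to the querying party's counterpart. The claim would then be that determining Bob's state ``from Alice's side'' requires at least one query whose outcome flows back. If that analogy cannot be made rigorous, I would fall back on the indistinguishability argument alone, and present the Slowdown Theorem as the motivating analogue rather than as the formal source.

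A second subtlety needs an explicit statement. One round-trip gives Alice knowledge, not common knowledge, so the bound is about minimal sufficiency for reflection-based commitment. It does not claim to solve Two Generals, which Gray's result rules out.
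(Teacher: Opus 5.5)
Your argument is sound, and it takes a genuinely different route from the paper's. The paper gives no separate proof. Its justification is the preceding analogy: a scatter-machine observation yields one bit, and correspondingly a single message yields only ``placement'' while a reflected digest yields ``agreement,'' with the Slowdown Theorem cited as the reason. You instead prove the lower bound by indistinguishability. This is in effect the Chandy--Misra principle that one process learns about another's state only through a message chain originating at it. You prove the upper bound by exhibiting the \textsc{tentative}~$\to$~\textsc{reflecting}~$\to$~\textsc{committed} path, and you demote the Slowdown Theorem to motivation.

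What your route buys is an actual derivation under stated hypotheses. It also shows that those hypotheses carry the weight. Over a synchronous, reliable channel with a deterministic, non-faulty receiver, Alice could infer Bob's state after a known delay with no reply at all. So the claim is model-dependent in a way the paper's analogy conceals. Your route also makes explicit that the formal content of the Slowdown Theorem (the cost of extracting precision from oracle queries) never enters. The only part of the analogy you actually use is that each scatter-machine query is itself a round trip from the experimenter's side. The paper's route supplies the physical narrative tying the link to Part~I, but not a proof, so falling back on indistinguishability is the right call.

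One thing needs tightening so that both halves bound the same predicate. Your remark that Bob cannot know ``that Alice knows he did'' uses nested knowledge. The same indistinguishability argument, applied after one round trip, shows Bob still cannot know his reflection arrived, so that criterion would also defeat your upper bound. Likewise, ``the receiver's state is consistent with its own,'' read in the present tense, is unknowable even after the round trip, because by invariant~(4) Bob may time out and abort after reflecting.

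Instead, state the target as Alice's knowledge of the stable fact that Bob validated this content. Bind the reflection to the transaction (say via $d$ or a sequence number), so that a stale or duplicated reflection, which your channel model allows, cannot be mistaken for it. The lower bound then needs only Alice's indistinguishability and matches the upper bound exactly, as your final paragraph already anticipates.
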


\subsection{Subtime and the Invisible Structure}

The Slowdown Theorem reveals a \emph{subtime} structure within links
that is invisible to coarse-grained observation.  In the scatter
machine, a projectile can be trapped at the wedge point in a perpetual
ping-pong---behavior that occurs ``below'' the resolution of the
measurement apparatus.  On a link, the analogous phenomenon is the
circulation of hyperdata---pre-frame bits that travel the link
continuously but never cross the PCIe boundary~\cite{borrill2026-ilt}.

This subtime structure is not an engineering curiosity; it is the
substrate on which the indefinite causal relation is maintained.  The
indefinite phase of a transaction corresponds to the period during
which hyperdata circulates without resolving into a definite orientation.
The resolution of indefiniteness---the transition from
$a \leftrightarrow b$ to $a \prec b$---corresponds to the projectile
escaping the wedge: a measurement has been performed, and the
previously subtime dynamics have been projected into observable time.%
\footnote{The correspondence is precise: in the scatter machine,
the trapped projectile carries \emph{information} (it has been
scattered) but this information is \emph{invisible} until the
projectile escapes.  On the link, hyperdata carries semantic content
that is invisible to the application until the reflecting phase
resolves.  Both are resources---usable, conserved, but hidden.}

\subsection{Perfect Information Feedback}

When the frame length on a link exceeds the link distance, a
remarkable phenomenon occurs: acknowledgments arrive while transmission
is still in progress.  Munamala et al.~\cite{munamala2025} call this
\emph{Perfect Information Feedback} (\pif{}): the link becomes its own
temporal reference, with the round-trip time shorter than the
transmission time.

\pif{} collapses the distinction between ``sending'' and ``receiving''
into a continuous bidirectional exchange.  There is no meaningful sense
in which one endpoint is the ``sender'' and the other the ``receiver'';
both are simultaneously transmitting and reflecting.  This is the
physical realization of the principle that causal order emerges from
interaction rather than being imposed by protocol convention.

Under \pif{} conditions, the \oae{} link state machine can complete
the \textsc{tentative}~$\to$~\textsc{reflecting}~$\to$~\textsc{committed}
cycle within a single frame time.  The semantic arrow is established
without any temporal gap between proposal and confirmation---the
transaction is ``born committed'' because the round-trip completes
before the forward transmission does.

\section{The Knowledge Balance Principle at the Link Level}
\label{sec:kbp}

Rob Spekkens' toy model~\cite{spekkens2007} demonstrated that many
features of quantum mechanics---including entanglement, no-cloning,
teleportation, and the uncertainty principle---can be reproduced by a
classical model supplemented with a single epistemic restriction: the
\emph{Knowledge Balance Principle} (\kbp{}).  The principle states that
in any state of maximal knowledge about a system, the amount known
must equal the amount unknown.%
\footnote{For a system with $2N$ bits of ontic state, an epistemic
agent can know at most $N$ bits.  This single constraint, applied
consistently, generates an extraordinary range of ``quantum''
phenomena in a purely classical framework.}

The Leibniz Bridge synthesis~\cite{borrill2026-leibniz} applies \kbp{}
directly to \oae{} link design.  Each link endpoint maintains two
classes of registers:

\begin{description}[leftmargin=2cm]
  \item[\textsc{ont} registers:] represent the \emph{imagined} ontic
    state of the link: four bits (two bits per direction, two
    directions).  These registers are a theoretical construct---no
    single endpoint can ever access the full ontic state.  The ONT
    register is the state the link ``would have'' if an omniscient
    observer could see both sides simultaneously; no such observer
    exists in any physical implementation.
  \item[\textsc{epi} registers:] hold the \emph{accessible} epistemic
    state---what the endpoint \emph{knows} about the link.  Two bits:
    exactly half the ontic state, per \kbp{}.  This is all any
    endpoint can ever know.
\end{description}

The \kbp{} constraint means that each endpoint always knows exactly
half of the link's state.  The other half is unknown---not because
of noise or latency, but because the epistemic restriction is
\emph{constitutive} of the protocol.  Knowing more than half would
violate \kbp{} and break the analogy with quantum mechanics; knowing
less than half would leave the endpoint unable to participate in
transactions.

This has a concrete consequence for the link state machine.  In
the \textsc{tentative} state, the sender knows it has proposed a
transition (one bit of epistemic state) but does not know whether
the receiver has accepted (the other bit is unknown).  In the
\textsc{reflecting} state, the receiver's reflection provides the
missing bit, satisfying \kbp{} at both endpoints simultaneously.
The transition to \textsc{committed} occurs precisely when both
endpoints' epistemic states are maximally informed---each knows
half the ontic state, and the two halves are consistent.

\begin{proposition}[Knowledge Balance on Links]
\label{prop:kbp}
The \oae{} link state machine implements a classical instantiation
of Spekkens' \kbp{}: at every stage of a transaction, each
endpoint's epistemic state contains exactly half the information
needed to determine the transaction's outcome.  Full information
(sufficient for commitment) requires contributions from both
endpoints.
\end{proposition}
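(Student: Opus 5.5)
The plan is to make the proposition precise by modelling each endpoint's knowledge as registers and then checking the claim state by state. I would model the link's ontic state as the four-bit \textsc{ont} register $(x_A, y_A, x_B, y_B)$. Here $x_P$ is the proposal bit contributed by endpoint $P$ (``I propose/initiate this transition'') and $y_P$ is its validation bit (``I have semantically validated the counterpart's content''). Each endpoint's \textsc{epi} register is the pair of bits it can actually access. The transaction outcome is taken to be the Boolean function $\mathrm{commit} = x_A \wedge y_B$: the initiator proposed and the reflector validated. This form is forced by the mandatory transitions: commitment is only reachable via \textsc{reflecting}, and invariant~(1) requires both endpoints to reach \textsc{committed}. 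With this model the statement has two parts: (i) the \emph{balance} claim, that at every reachable state each \textsc{epi} register fixes exactly two of the four ontic bits, i.e.\ half the information; and (ii) the \emph{necessity} claim, that the outcome function is not determined by either endpoint's register alone.

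The first step is to enumerate the reachable configurations of the six-state automaton of Definition~\ref{def:link-states} and assign each one its \textsc{epi} contents. In \textsc{reset} and \textsc{idle} the endpoint knows its own two bits, both zero. In \textsc{tentative} the initiator knows $x_A = 1$ and its own $y_A$, while $y_B$ is unknown, as the preceding paragraph of the paper describes. In \textsc{reflecting} the reflected digest makes $y_B$ accessible to the initiator, but only in exchange for no longer tracking a bit it has delegated. This trade is what keeps the count at exactly two, and it is the step that needs care. In \textsc{committed} and \textsc{aborted} the endpoint holds the pair $(x_A, \mathrm{commit})$. For each transition in \eqref{eq:reset-idle}--\eqref{eq:abort-idle} I would verify that the update map on \textsc{epi} preserves the cardinality two. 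This is a finite check by induction on the transition sequence. Invariant~(3), which turns loss and reordering into abort, ensures that no unlisted transitions need to be considered.

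The second step is the necessity claim. For each endpoint I would exhibit two ontic states that agree on its \textsc{epi} register but differ on $\mathrm{commit}$. For $A$ in \textsc{tentative}, vary $y_B$. For $B$, vary $x_A$ before the proposal is received. This shows that commitment requires contributions from both sides, in line with the footnote that there is no \textsc{tentative}~$\to$~\textsc{committed} edge. It also connects the argument to Proposition~\ref{prop:roundtrip}: one message supplies only one of the two missing bits.

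I expect the main obstacle to be the \textsc{reflecting}$\to$\textsc{committed} step. At that point it looks as if the initiator knows three bits: its own two plus the reflected $y_B$. This would break the ``exactly half'' claim. My first attempt would be to argue, in Spekkens' style, that learning $y_B$ disturbs, and so renders inaccessible, the complementary bit $y_A$, because the reflection overwrites the initiator's validation register. If that fails, I would weaken ``exactly half the ontic bits'' to ``exactly half the bits relevant to the outcome function,'' which still suffices for the proposition as stated.
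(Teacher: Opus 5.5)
\paragraph{The outcome function fails at the reflecting stage.} Your choice of outcome function breaks the proposition at exactly the stage you flag as delicate, and neither of your repairs reaches the real problem. You take $\mathrm{commit} = x_A \wedge y_B$, and you give the initiator the \textsc{epi} register $(x_A, y_B)$ in \textsc{reflecting}, with $\mathrm{commit}$ itself in the register in \textsc{committed} and \textsc{aborted}. Then one endpoint's register determines the outcome. So both ``exactly half the information needed to determine the outcome'' and your necessity claim~(ii) are false at those stages. Your witnesses cover only \textsc{tentative} and the pre-receipt stage. Forgetting $y_A$ keeps the count of ontic bits at two, but it does not matter, because $y_A$ is not an argument of $\mathrm{commit}$. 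Your fallback, ``half of the outcome-relevant bits,'' fails for the same reason: the initiator then holds both relevant bits.

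The witness for $B$ also fails on reachable states. Before the proposal arrives, $B$ has validated nothing, so $y_B = 0$ sits in $B$'s own register and $x_A \wedge y_B = 0$ for both values of $x_A$. Once $B$ validates, it has necessarily received the proposal, so an honest register for $B$ also contains $x_A = 1$, and $B$ can evaluate $\mathrm{commit}$ alone. The root issue is that you identify ``the transaction's outcome'' with the current value of a function whose two arguments can sit at one endpoint after a single message. Contrary to your claim, nothing in the mandatory transitions forces this form.

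\paragraph{The trade is stipulated, and the paper uses a different idea.} The trade in \textsc{reflecting} is a stipulation, not a consequence of the automaton. Neither Definition~\ref{def:link-states} nor the transitions say that an endpoint loses access to its own validation bit when the reflection arrives. The Spekkens-style disturbance is therefore imported from outside, and claim~(i) holds only by construction.

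The paper gives no formal proof; its only support is the narrative paragraph before the proposition. That narrative does not need your trade. It credits the \textsc{tentative} sender with just one known bit, the proposal, and lets the reflection fill the second slot of the two-bit \textsc{epi} register. After the reflection, the sender's register is essentially your $(x_A, y_B)$. What the paper changes is the outcome: commitment is the joint condition that both endpoints are maximally informed \emph{and the two halves are consistent}, matching invariant~(1). No single register can witness a consistency relation between two registers, and that is what makes ``full information requires contributions from both endpoints'' true.

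To rescue your plan, you would need an outcome predicate on the pair $(\mathrm{EPI}_A, \mathrm{EPI}_B)$ that is not a function of either component alone, with ``half'' counted uniformly against the four-bit \textsc{ont} register. Note also that the paper's narrative gives the \textsc{tentative} sender only one known bit. So ``at every stage'' can at best be established as ``at every stage of maximal knowledge,'' and you should state which reading you prove.
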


This is why unilateral commitment---the \fito{} pattern of
``send and assume received''---is not merely unreliable but
\emph{epistemically incoherent}: it claims to know more about
the link's state than \kbp{} permits.

\section{Comparative Analysis}
\label{sec:comparison}

We now compare the \oae{} link semantics with four existing
interconnect technologies: RDMA (InfiniBand and RoCE), NVLink,
UALink, and CXL.  The comparison is organized around the semantic
arrow: which technologies preserve meaning across transactions,
and which permit semantic corruption.

\subsection{RDMA: Completion Without Commitment}

RDMA's one-sided operations (Write, Read, Atomic) are the purest
expression of \fito{} at the hardware level.  A Remote Write
completes when the NIC signals that data has been placed in the
receiver's memory.  But ``placed'' does not mean ``committed'':
the receiving application may not have seen the data; the data may
be in a NIC buffer, not yet visible through the cache hierarchy;
and multi-field updates are not atomic beyond 8 bytes.%
\footnote{Ziegler et al.~\cite{ziegler2024} provide the first
comprehensive analysis of synchronization bugs in one-sided RDMA,
finding that concurrent cache-line retrievals can result in logically
invalid combinations of fields---despite successful completion
signals.  Completion succeeded; meaning was destroyed.}

In the \oae{} framework, RDMA's completion corresponds to
the \textsc{tentative} state: data has been proposed but not
confirmed.  RDMA has no \textsc{reflecting} state and no
\textsc{committed} state.  The NIC's completion signal is
treated as commitment---a violation of invariant~A2 (delivery
must not be treated as semantic agreement).

\subsection{NVLink: Signal-Based Ordering}

NVLink's semantic model is stronger than RDMA's.  Its
signal-based ordering guarantees that when a signal operation
arrives at the destination, all preceding put operations on the
same context are visible.  This is a form of release-acquire
semantics: the signal serves as a one-directional commitment
point.

In the \oae{} framework, NVLink's signal corresponds to a
partial \textsc{reflecting} phase: the sender knows the data has
been placed and made visible.  But the signal is unilateral---the
receiver does not reflect back a semantic digest.  If the receiver
interprets the data differently from the sender's intent, the
divergence is undetectable.  NVLink provides atomicity of
visibility (A3) but not atomicity of communication (A2).

\subsection{CXL: Cache Coherence Without Transaction Semantics}

CXL~3.0's back-invalidation mechanism~\cite{cxl30} provides
hardware-enforced cache coherence at the cache-line level.  When a
CXL device modifies a cache line, it issues a back-invalidate snoop
to all host caches that may hold stale copies.  This ensures that
every reader sees the latest version of each individual cache line.

But cache-line coherence is not transaction coherence.  A
multi-cache-line update is not atomic: readers may observe some
lines from before the update and some from after.  CXL addresses
atomicity of visibility (A3) at the cache-line granularity but not
at the transaction granularity.  Like RDMA, CXL has no
\textsc{reflecting} phase and no mechanism for semantic agreement
between endpoints.

\subsection{Consensus Numbers}

The theoretical foundation for comparing these technologies is
Herlihy's consensus hierarchy~\cite{herlihy1991}.  An object's
\emph{consensus number} is the maximum number of processes for which
the object can solve wait-free consensus.

\begin{theorem}[Consensus Number of \oae{}]
\label{thm:consensus}
\oae{} primitives have infinite consensus number, enabling wait-free
implementations of arbitrary concurrent objects.
\end{theorem}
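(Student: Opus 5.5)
The approach is to use Herlihy's universality result: an object that solves wait-free consensus for $n$ processes, together with read/write registers, gives a wait-free implementation of any sequentially specified object shared by $n$ processes. So it suffices to show that \oae{} primitives solve wait-free consensus for every $n$. Standard witnesses for infinite consensus number are compare-and-swap, or a sticky ``first-writer-wins'' register. I will therefore show that the \oae{} link transaction, as fixed by Definition~\ref{def:link-states} and transitions \eqref{eq:reset-idle}--\eqref{eq:abort-idle}, implements such a sticky decision object.

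First I abstract a single transaction slot as a shared object with one operation, $\mathrm{propose}(v)$. A process moves the slot from \textsc{idle} to \textsc{tentative} carrying $v$. The slot then either passes through \textsc{reflecting} to \textsc{committed}, or goes to \textsc{aborted}. Two link invariants do the work. Invariant~(1) says committed state is exposed only when both endpoints agree. The ``must not be silently revoked'' clause says a committed value is stable. Together these give the \emph{stickiness} property: at most one value ever commits in a slot, and once committed it persists.

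Second comes the protocol. Each of $n$ processes calls $\mathrm{propose}(v_i)$ on the same slot and then reads the committed value; if its own attempt aborted, it reads the value that did commit. Agreement follows from stickiness, and validity holds because only proposed values enter \textsc{tentative}.

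Third, and this is the main obstacle, is \emph{wait-freedom}. Invariant~(4) turns timeouts into \textsc{aborted}, and after an abort the slot returns to \textsc{idle}. A process could then be aborted forever while others keep retrying, so the slot might never commit. I would try to close this gap in two ways.
\begin{enumerate}[leftmargin=1.2cm]
  \item Define the object so that a slot accepts \emph{exactly one} \textsc{tentative} proposal. Any later proposer does not start a new transaction; it only observes the slot's eventual outcome.
  \item Use the reflecting phase (Proposition~\ref{prop:roundtrip}): once a proposal is reflected, its owner alone can drive it to \textsc{committed} in a bounded number of its own steps.
\end{enumerate}
If an abort of the unique proposal is still possible, I would fall back to a sequence of slots $s_1,s_2,\dots$ and decide the value of the first committed slot. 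To bound the number of retries, I would add the assumption that aborts occur only on link failure, not because of contention between processes.

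I expect this step to need an explicit modelling assumption: that the \oae{} primitive is atomic in Herlihy's sense, meaning linearizable with a single linearization point at the \textsc{reflecting}$\to$\textsc{committed} transition. With that assumption the argument reduces to the classical proof that compare-and-swap has consensus number $\infty$. I would state the assumption openly rather than claim the theorem follows from the state machine alone.
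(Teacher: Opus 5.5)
Your final fallback (assume a linearizable CAS-like primitive and run the classical argument) is what the paper's sketch does. The paper posits compare-and-swap on a decision register whose operations are ordered by the Global Ordering Service, takes linearizability of \oae{} transactions as given, and appeals to the reversibility engine for progress. The gap is in your earlier steps, which try to extract that object from the link state machine and do not succeed. First, stickiness does not follow from the invariants. Transition~\eqref{eq:comm-idle} returns a committed slot to \textsc{idle}, after which a new transaction may commit a different value. The clause ``must not be silently revoked'' forbids silent undoing, not an explicit later commit. Second, and more seriously, link invariant~(1) is a bilateral condition between the two endpoints of one link. Nothing in Definition~\ref{def:link-states} arbitrates among $n>2$ concurrent proposers. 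Your item~1 (exactly one of $n$ callers moves the slot to \textsc{tentative} carrying its value, readable by the others) is itself an atomic $n$-process sticky register, so it presupposes the consensus power you are trying to prove. The paper gets this $n$-way ordering from the Global Ordering Service, not from the link invariants, and you need an ingredient of that kind.

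Your wait-freedom repairs also fail. Under item~1, losers wait for the winner's transaction to resolve. That depends on the winner's steps, and it blocks forever if the winner crashes in \textsc{tentative}; item~2 has the same dependence. Timeouts and a sequence of slots do not rescue this in an asynchronous model. A timeout cannot tell a slow winner from a crashed one, so invariant~(4) may abort a correct winner in every slot, and ``aborts only on link failure'' does not bound the retries. The construction is at best obstruction-free. Your suspicion here is sound, and it applies equally to the paper's claim that aborting to quiescence ``rather than blocking'' yields wait-freedom.

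Your stated assumption (linearization point at \textsc{reflecting}$\to$\textsc{committed}) is also not enough, because it says nothing about invocations that end in \textsc{aborted}. What the argument needs is that $\mathrm{propose}$ be a total, wait-free, linearizable operation with a sticky/CAS sequential specification. The first invocation installs its value, every later one returns that value, and no invocation returns while the register is still empty. With that assumption the state machine plays no role, and the classical CAS argument plus Herlihy's universal construction gives the theorem, which is the paper's route.
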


\begin{proof}[Proof sketch]
\oae{} can implement $n$-process consensus via compare-and-swap on
a decision register ordered by the Global Ordering Service.  Since
\oae{} transactions are linearizable and the reversibility engine
guarantees progress (failed transactions abort to quiescence rather
than blocking), the construction satisfies Herlihy's wait-freedom
requirement for all $n$.  Full proof in~\cite{borrill2026-oae}.
\end{proof}

\begin{table}[h]
\small
\begin{center}
\begin{tabular}{lcccc}
\toprule
& \textbf{\oae{}} & \textbf{RDMA} & \textbf{NVLink} & \textbf{CXL} \\
\midrule
Consensus number & $\infty$ & 2 & 2 & 2 \\
Atomicity of updates (A1) & \checkmark & partial & \checkmark & partial \\
Atomicity of communication (A2) & \checkmark & --- & --- & --- \\
Atomicity of visibility (A3) & \checkmark & --- & \checkmark & line-level \\
Reflecting phase & mandatory & absent & absent & absent \\
Reversibility & native & --- & --- & --- \\
\bottomrule
\end{tabular}
\end{center}
\caption{Semantic comparison of interconnect technologies.  Only \oae{}
enforces all three atomicity invariants and provides a reflecting
phase.}
\label{tab:comparison}
\end{table}

\section{Summary and Preview of Part~III}
\label{sec:summary}

This paper has presented the semantic foundations of Open Atomic
Ethernet links.  The central claim is that the semantic arrow of
time---the direction in which meaning is preserved across
transactions---is not imposed by protocol convention but
\emph{created} by transaction structure.

The \oae{} link state machine enforces this through three mechanisms:
the mandatory reflecting phase (no commitment without round-trip
confirmation), the four-valued causal structure of \ilt{} (admitting
genuinely indefinite causal order during the tentative phase), and
the Knowledge Balance Principle (each endpoint knows exactly half the
link state, ensuring that commitment requires bilateral contribution).

The Slowdown Theorem establishes that these mechanisms are not
engineering preferences but physical necessities: round-trip
measurement is the minimum interaction required to establish causal
order, and the subtime structure of links provides the substrate on
which indefinite causality is maintained.

Part~III~\cite{borrill2026-partIII} examines what happens when these
principles are violated at industrial scale: RDMA's ``completion
fallacy'' and its consequences for AI training at 32,000 GPUs.


\section*{Acknowledgments}
This paper was developed with AI assistance (Claude, Anthropic) for literature organization, drafting, and \LaTeX{} preparation. The research program originates from the author's work on hardware-verified deterministic networking (VERITAS, 2005), replicated state machines (REPLICUS, 2008), atomic Ethernet (Earth Computing, 2012--2020), and the FITO category mistake framework (D{\AE}D{\AE}LUS, 2024--present). All technical claims, interpretations, and arguments are the sole responsibility of the author.


\end{document}